\newtheorem{theorem}{Theorem}[section]
\newtheorem{corollary}[theorem]{Corollary}
\newtheorem{claim}[theorem]{Claim}
\newtheorem{definition}[theorem]{Definition}
\def\squarebox#1{\hbox to #1{\hfill\vbox to #1{\vfill}}}
\newcommand{\qed}{\hspace*{\fill}
\vbox{\hrule\hbox{\vrule\squarebox{.667em}\vrule}\hrule}\smallskip}
\newenvironment{proof}{\noindent{\bf Proof:~~}}{\(\qed\)}
\newcommand{\ROG}{ROG}
\begin{document}

\title{On the Greedy Algorithm for Combinatorial Auctions\\ with a Random Order}

\author {Shahar Dobzinski \and Ami Mor}
\maketitle

\begin{abstract}
In this note we study the greedy algorithm for combinatorial auctions with submodular bidders. It is well known that this algorithm provides an approximation ratio of $2$ for every order of the items. We show that if the valuations are vertex cover functions and the order is random then the expected approximation ratio imrpoves to $\frac 7 4$.
\end{abstract}

\section{Introduction}

In a combinatorial auction there is a set $M$ items ($|M|=m$) for sale. There is also a set of $N$ bidders ($|N|=n$). Each bidder $i$ has a valuation function $v_i:2^M\rightarrow \mathbb R$ that denotes the value of bidder $i$ for every possible subset of the items. We assume that the valuation functions are monotone (for all $S\subseteq T$, $v_i(S)\geq v_i(T)$) and normalized ($v_i(\emptyset)=0$). For an item $j$ and a bundle $S$, we use the notation $v(j|S)$ to denote the marginal value of item $j$ given a bundle $S$: $v(\{j\}+S)-v(S)$. The goal is to find an allocation of the items that maximizes the welfare $\Sigma_iv_i(S_i)$. We assume that access to the valuation function is done via \emph{value queries}: for a bundle $S$, what is $v(S)$? Our goal is to (approximately) maximize the welfare using only $poly(m,n)$ value queries.

In this note we consider the case where the valuations of the bidders are submodular. A valuation $v$ is \emph{submodular} if $v(j|S)\geq v(j|T)$ for every $S\subseteq T$. The problem of combinatorial auctions with submodular valuations was first introduced by Lehmann, Lehmann, and Nisan \cite{LLN01} who showed that the following simple greedy algorithm provides an approximation ratio of $2$: go over all items one by one at some arbitrary order, allocate each item to the bidder that maximizes the expected marginal value of the item given the set of items that he won so far.

Follwing \cite{LLN01}, the problem was extensively studied in the literature and is now well understood: there exists an $\frac e {e-1}$ approximation algorithm that makes polynomiallly many queries \cite{V08} and this is the best possible \cite{MSV08, F98}.

Here we return to the classic greedy algorithm of $\cite{LLN01}$ and ask whether the expected approximation ratio improves when we use a random ordering of the items instead of an arbitrary one. While we do not know if this is the case in general, we are able to answer this question for a special case:

\vspace{0.1in} \noindent \textbf{Theorem: } If all valuations are vertex cover valuations then the random greedy algorithm provides an approximation ratio of $\frac 7 4$.

\vspace{0.1in} \noindent Vertex cover functions are defined as follows: let $G=(V,E)$ be a graph with $|V|=m$. We associate the set of vertices $V$ with the set of items $M$ in the combinatorial auction. We let $v(S)$ be the number of edges that have at least one end point in $S$. It is easy to see that every vertex cover function is also submodular (and in fact is also a coverage valuation)\footnote{The name \emph{monotone cut functions} is also sometimes associated with these functions for the following reason: consider two bidders with vertex cover valuations that are defined on the same graph $G$. Consider any allocation of all items $(S,M-S)$. The value of this allocation is the number of vertices plus twice the number of edges in the cut defined by $S$.}.

We also show that the random greedy algorithm does not provide an approximation ratio better than $\frac {240} {177}\approx 1.356$ on vertex cover functions. Note that the problem is APX-hard \cite{DNS05}.

There are several advantages to our approach. First, there is the obvious advantage of showing that simple and practical algorithms provide a good approximation ratio. Furthermore, even though there exists a polynomial time algorithm that obtains an approximation ratio of $\frac e {e-1}$ \cite{V08}, our work contibutes to the growing body of literature attempting to obtain faster algorithms for problems in submodular optimization with comparable approximation guarantees (see, e.g., \cite{BV14, CJV15, BFS15}).

In addition, the greedy algorithm is in a sense an online algorithm: items arrive one by one and we have to allocate the items immediately. In this model it is known that the greedy algorithm is essentially optimal \cite{KPV13} (although an algorithm with a slightly better approximation ratio of $ 2 -\frac 1 n$ exists \cite{DS06}). Hence it is natural to explore hybrid models, for example, in the spirit of the famous secretary problem, when the set of items is determined adversarially but their order is random.

A final advantage of the random greedy algorithm that we would like to discuss here is an application to auction design. Christodoulou et al \cite{CKS08} study simultaneous second price auctions: each bidder submits a bid for every item, he wins the set of items for which his bid was maximal, and pays the sum of the second-highest bids of every item he won. They show that subject to a certain no-overbidding condition every Nash equilibrium provides a $2$ approximation to the optimal welfare. Furthermore, the greedy algorithm can be easily adapted to produce a Nash equilibrium with that approximation ratio. It is not known how to compute a Nash equilibrium with an approximation ratio better than $2$ with only polynomially many value queries (although this is possible for several special cases, see \cite{DFK15}), and our work suggests the analysis of the ranodm greedy algorithm as a possible attack in this direction.

We leave open the question of determining the approximation ratio of the random greedy algorithm for combinatorial auctions with general submodular bidders. A very interesting open question is to understand whether the random greedy algorithm can obtain an approximation ratio better than $\frac e {e-1}$ for interesting special cases, such as combinatorial auctions with budget additive valuations. The current best known algorithm guarantees an approximation ratio of $\frac 4 3$ via a complicated iterative rounding scheme \cite{CG08}. Can a mathcing or even better ratio be obtained by the random greedy algorithm?

\section{Preliminaries}

We consider the following Random Greedy algorithm:

\subsubsection*{$\ROG\left(v_1...v_n,M\right)$}
\begin{enumerate}
\item Choose uniformly at random a permutation $\sigma$ on the order of the items.
\item Let $S_1^0=...=S_n^0 = \emptyset$.
\item For $t=1..m$:
\begin{enumerate}
\item Denote by $j=\sigma(t)$ the $t\text{'th}$ item in the permutation.
\item Let $i$ be a player whose marginal value of $j$ is maximal. I.e., $i\in \arg\max_iv_i\left(j|S_i\right)$.
\item For every $i'\neq i$ $S_{i'}^t=S_{i'}^{t-1}$. Set $S_i^t = S_i^{t-1} \cup \left\{j\right\}$.
\end{enumerate}
\item Return $S_1^m...S_n^m$.
\end{enumerate}

We analyze the approximation ratio of the algorithm in the case where all functions are vertex cover functions:

\begin{definition}[vertex cover functions] 
We say that $v$ is a \emph{vertex cover function} if there exists an undirected graph $G=(V,E)$ such that
for every $S\subseteq V$ we have that $v\left(S\right)=|\left\{e=\left(i,j\right)|i\in S \text{ or } j\in S\right\}|$.
\end{definition}

From now on we assume that all the $v_i$'s are vertex cover functions. I.e., there are graphs $G_1...G_n$ such that for every $i$, $v_i$ is the vertex cover function of $G_i$. 

\section{The Approximation Ratio of the Random Greedy Algorithm}
\subsection{A Lower Bound}
We start the analyis of the approximation ratio of the random greedy algorithm with a lower bound.
\begin{claim} \label{COUNTER}
There is an instance for which $\frac{E[\ROG\left(v_1...v_n,M\right)]}{OPT} \leq \frac{177}{240}+o\left(1\right)$.
\end{claim}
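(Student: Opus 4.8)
The plan is to exhibit an explicit family of instances, parameterized by an integer $N$, on which the random greedy algorithm is forced into a predictable and costly misallocation, and to show that $\frac{E[\ROG(v_1\dots v_n,M)]}{OPT}$ on these instances tends to $\frac{177}{240}$ as $N\to\infty$. The instances will be built from a fixed gadget --- a bounded number of bidders whose graphs are assembled from a few ``pivotal'' high-degree vertices joined to a large common pool of low-degree vertices --- possibly taken in many disjoint copies. Disjointness is convenient: with vertex-cover valuations the greedy run inside one copy depends only on the internal relative order of that copy's items (which is uniform), and there are no cross-copy marginal interactions, so $E[\ROG]$ splits as a sum over copies. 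The phenomenon we exploit is that for a vertex-cover valuation $v_i$ and a current bundle $S_i$ the marginal $v_i(j\mid S_i)$ equals the number of neighbors of $j$ in $G_i$ not yet in $S_i$; hence the greedy step hands each new item $j$ to the bidder who has so far collected the fewest of $j$'s neighbors. In particular a pivotal high-degree item is handed to whichever bidder has already hoarded the fewest of the shared common items it dominates --- and under a random order that bidder typically still holds a constant fraction of those items, whose value is then irrecoverably double counted.

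For the lower bound on $OPT$ I would simply write down one allocation: give each pivotal item to the bidder whose star it centers, and distribute the common pool so that the remaining edges of the other bidders' graphs get covered; this gives an allocation of some value $V(N)$. For the matching upper bound I would use that any allocation has welfare at most $\sum_i v_i(V)=\sum_i |E(G_i)|$, since each edge of $G_i$ contributes at most once to $v_i$ (or, if that is not exactly tight for the chosen gadget, a small fractional relaxation of the gadget). Either way $OPT=V(N)\,(1+o(1))$.

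To compute $E[\ROG]$ I would identify the bounded set of ``pivotal events'' --- essentially the arrival times of the high-degree items relative to the common pool --- and argue that, conditioned on the relative order of the pivotal items and on the counts $k_0,k_1,\dots$ of common items landing in each resulting ``phase,'' the greedy allocation, and hence the realized welfare, is a fixed deterministic function computed directly from the marginal formula above (a fresh common item goes to a star owner until that star's center has been taken, after which it flips to the competing bidder, and so on). Averaging over the uniform permutation then reduces to (a) averaging over the relative orders of the constantly many pivotal items, and (b) replacing each phase count $k_\ell$ by its expectation, which costs only $o(1)$ since the $k_\ell$ are hypergeometric and concentrate around their means. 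Ties in the greedy step I would remove by an infinitesimal $\varepsilon$-perturbation of the valuations (equivalently, by noting that greedy's tie-breaking is worst-case), forcing the adversarial branch; letting $\varepsilon\to 0$ and $N\to\infty$ yields $\frac{177}{240}+o(1)$.

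The hard part is twofold. First, the \emph{design} of the gadget: the naive single shared-star construction only gives ratio $\frac{7}{8}$, because after greedy's mistake the losing bidder recovers almost all of the over-collected common value; the gadget must be arranged so that this recovery is blocked --- for instance by making the losing bidder's residual edges all incident to a center it can no longer obtain --- and tuning the relative sizes of the several pieces so that the expected wasted welfare is the fraction $\frac{63}{240}$ of $OPT$ that the claim requires takes some searching. Second, the bookkeeping: since the target constant is exact, the case analysis of greedy over all relative orders of the pivotal items, together with the expectations of the phase counts, must be carried out without slack --- routine but error-prone, and this is exactly where the value $\frac{177}{240}$ is pinned down.
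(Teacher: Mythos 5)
Your proposal is a plan for how one might go about constructing a bad instance, not a proof: the entire content of the claim --- an explicit instance and an exact expectation computation yielding the constant $\frac{177}{240}$ --- is left to ``some searching'' and a ``routine but error-prone'' case analysis that you do not carry out. You correctly intuit the shape of the construction (a few high-degree pivotal vertices plus a large pool of low-degree common vertices, with the ratio emerging as $m\to\infty$), and you correctly identify that the naive single shared-star gadget is too weak. But you never specify the graphs, never identify the events that drive greedy's misallocations, and never compute a single probability. Since the statement asserts a particular numerical constant, a proof must actually pin that constant down; a sketch that ends with ``this is exactly where $\frac{177}{240}$ is pinned down'' has not proved anything.

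For comparison, the paper's proof is short and concrete. It uses three bidders on $m$ (odd) vertices: $G_1$ is the star centered at $m$; $G_2$ is the perfect matching $\{(1,2),(3,4),\dots,(m-2,m-1)\}$; $G_3$ is the shifted matching $\{(2,3),(4,5),\dots,(m-3,m-2)\}$; ties are broken in favor of lower-indexed players. Then $OPT=2m-1$ (cover every edge), player $1$'s value is deterministically $m-1$, player $2$'s expected value is $\frac{m-1}{3}$ (for each matching edge, player $2$ covers it unless $m$ arrives after both its endpoints, probability $\frac13$), and player $3$'s expected value is $\frac{17(m-3)}{120}$ (for each of its matching edges, the event that player $3$ covers it is the union of two order events on five items, computed by inclusion--exclusion to probability $\frac{17}{60}$). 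Summing gives $E[\ROG]=(m-1)\bigl(1+\frac13+\frac{17}{120}\bigr)+O(1)$ and hence the ratio $\frac{177}{240}+o(1)$. None of the auxiliary machinery you propose (disjoint copies, $\varepsilon$-perturbations, hypergeometric concentration of ``phase counts'') is needed: the instance is a single connected gadget, adversarial tie-breaking is built into the instance directly, and the relevant events for each matching edge are simple relative-order events on three or five items whose probabilities are exact, not merely concentrated. If you want to turn your sketch into a proof you must commit to a concrete instance and do the expectation calculation; the paper's choice shows that a very small gadget (one star and two shifted matchings) already suffices.
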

\begin{proof}
We produce an instance with three players and $m$ is odd. $G_i=\left(V,E_i\right)$. 
\begin{itemize}
\item $G_1$ is a star graph: $E_1 = \left\{\left(1,m\right),\left(2,m\right)...\left(m-1,m\right)\right\}$.
\item $G_2$ is a line graph: $E_2 = \left\{\left(1,2\right),\left(3,4\right)...\left(m-2,m-1\right)\right\}$.
\item $G_2$ is also a line graph: $E_3 = \left\{\left(2,3\right),\left(4,5\right)...\left(m-3,m-2\right)\right\}$.

\end{itemize}
In case of ties, we prefer to give items to player $1$, rather than to player $2$, rather than to player $3$.

The optimal allocation is $OPT_1 = \left\{m\right\}$, $OPT_2 = \left\{1,3,...m-2\right\}$ and $OPT_3 = \left\{2,4,...m-1\right\}$.
$\left(OPT_1,OPT_2,OPT3\right)$ uses all the edges, so it has to be the optimal solution.
The value of the optimal allocation is $2m-1$.

\begin{claim} \label{player1}
$E[v_1\left(S_1^m\right)]=m-1$.
\end{claim}
\begin{proof}
Player $1$ always takes item $m$, and then his value is $v_1\left(S_i^m\right)\geq v_1\left(m\right) = m-1$.
On the other hand $v_1\left(M\right) = m-1$, so $v_1\left(S_1^m\right) = m-1$ always.
\end{proof}

\begin{claim} \label{player2}
$E[v_2\left(S_2^m\right)]=\frac{m-1}{3}$ .
\end{claim}
\begin{proof}
For $i=1...\frac{m-1}{2}$, let $A_i$ be an indicator random variable, that indicates whether we assigned items $2i-1$, or $2i$ to player $2$.
Note that $v_2\left(S_2^m\right) = \sum_{i=1}^\frac{m-1}{2} A_i$.
If $m$ appeared after both $2i-1$ and $2i$, the algorithm assigned $2i-1$, $2i$ and $m$ to player $1$. Otherwise, the algorithm assigned either
$2i-1$ or $2i$ to player $2$.
 Therefore, $Pr[A_i=1] = \frac{2}{3}$, and we have 
$$E[v_2\left(S_2^m\right)] = \sum_{i=1}^\frac{m-1}{2} E[A_i]=\frac{m-1}{3}$$
\end{proof}

\begin{claim} \label{player3}
$E[v_3\left(S_3^m\right)]=\frac{17\left(m-3\right)}{120}$.
\end{claim}
\begin{proof}
For $i=1...\frac{m-3}{2}$ let $A_i$ be an indicator random variable that indicates whether we assigned item $2i$ or $2i+1$ to player $3$.
I.e $A_i$ indicates if we used the edge $\left(2i,2i+1\right)$ in the graph $G_3$.
Note that $v_3\left(S_3^m\right) = \sum_{i=1}^\frac{m-1}{2} A_i$.

Consider the item $2i$ and denote $t = \sigma^{-1}\left(j\right)$. $v_1\left(2i\right)=v_2\left(2i\right)=v_3\left(2i\right)=1$.
The algorithm assigns item $2i$ to player only if $v_1\left(2i|S_i^t\right)=v_2\left(2i|S_i^t\right)=0$ and $v_3\left(2i|S_i^t\right)=1$.
If $v_2\left(2i|S_i^t\right)=0$ then it means that
\begin{enumerate}
\item $2i-1$ appears before $2i$ in the permutation $\sigma$, because the algorithm used edge $\left(2i-1,2i\right)$.
\item $m$ appears before $2i-1$, because otherwise we would have given item $2i-1$ to player $1$.
\end{enumerate}
Therefore, the order of items $2i-1$, $2i$ and $m$ must be $\left(m,2i-1,2i\right)$. 
Similarly, the order of items $2i+1$, $2i+2$ and $m$ must be $\left(m,2i+2,2i+1\right)$. 

Let $B_i$ be the event that the order of items $2i-1,2i,m$ is $\left(m,2i-1,2i\right)$.
Let $C_i$ be the event that the order of items $2i+1,2i+2,m$ is $\left(m,2i+2,2i+1\right)$.
$$Pr[A_i]=Pr[B_i=1 \text{ or } C_i=1]= Pr[B_i] + Pr[C_i] - Pr[ B_i=1 \text{ and } C_i=1 ]$$

Note that $B_i=1$ and $C_i=1$ if and only if all $3$ conditions are true:
\begin{enumerate}
\item $m$ is the first item among $\left(m,2i-1,2i,2i+1,2i+2\right)$. This happens with probability $\frac{1}{5}$.
\item $2i-1$ appears before item $2i$. This happens with probability $\frac{1}{2}$.
\item $2i+2$ appears before item $2i+1$. This happens with probability $\frac{1}{2}$.
\end{enumerate}
Note that these conditions are independent so $Pr[ B_i=1 \text{ and } C_i=1 ] = \frac{1}{20}$.

Overall, we have $Pr[A_i] = Pr[B_i] + Pr[C_i] - Pr[ B_i=1 \text{ and } C_i=1 ] = \frac{1}{6} + \frac{1}{6} - \frac{1}{20}=\frac{17}{60}$.
Since there are $\frac{m-3}{2}$ different $A_i$, then $E[v_3\left(S_3^m\right)]=\frac{17\left(m-3\right)}{120}$.
\end{proof}

Back to the proof of claim \ref{COUNTER}:
$$E[\left(\ROG\right)] = E[v_1\left(S_1^m\right)]+E[v_2\left(S_2^m\right)]+E[v_3\left(S_3^m\right)] = \left(m-1\right)\left(1+\frac{1}{3}+\frac{17}{120} + O\left(1\right)\right)$$
Therefore,
$\frac {E[\left(\ROG\right)]}{OPT} = \frac{177}{240} + o\left(1\right)$.
\end{proof}

\subsection{The Upper Bound: A Warm Up}

As a warm up, we first show that when all valuations are vertex cover functions then the random greedy algorithm provides an approximation ratio of $2$. Of course, in \cite{LLN01} it was shown that \emph{every} ordering provides this approximation ratio, but the proof we give provides an illustration of the properties of vertex cover functions that we use in our main result.

\begin{theorem} \label{0.5}
Let $v_1...v_n$ be $n$ vertex cover functions. $E[\ROG\left(v_1...v_n,M\right)]\geq \frac{1}{2}\sum v_i\left(OPT_i\right)$.
\end{theorem}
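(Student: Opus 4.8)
The plan is to adapt the classical analysis of the greedy algorithm from \cite{LLN01}, but to phrase its key inequality in the language of edges, which is the form we will later refine; note that for the factor $2$ neither the randomness of $\sigma$ nor the vertex-cover structure is really needed, so the point is to set up the bookkeeping. First I would assume without loss of generality that $(OPT_1,\dots,OPT_n)$ partitions $M$ (monotonicity lets us hand any leftover item to an arbitrary bidder). Fix a permutation $\sigma$; for an item $j$ processed at step $t_j$ write $g(j)$ for the bidder greedy gives $j$ to, $o(j)$ for the bidder OPT gives $j$ to, and $m_j=v_{g(j)}(j\mid S_{g(j)}^{t_j-1})$ for the marginal gain greedy collects there, so $\ROG=\sum_j m_j$. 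The one property of vertex cover functions I would record is that $v_i(j\mid S)$ equals the number of edges of $G_i$ incident to $j$ whose other endpoint lies outside $S$; hence $v_i(j\mid S)$ is non-increasing in $S$, and greedy's maximality gives $m_j\ge v_i(j\mid S_i^{t_j-1})$ for every bidder $i$.

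Next I would charge each ``win of OPT'' to a ``gain of greedy''. A pair $(i,e)$ with $e=(a,b)\in E_i$ and $e$ covered by $OPT_i$ (i.e.\ $a\in OPT_i$ or $b\in OPT_i$) is counted once by $v_i(OPT_i)$, so $\sum_i v_i(OPT_i)$ is exactly the number of such pairs. If $e$ is also covered by the final bundle $S_i^m$, charge $(i,e)$ to $v_i(S_i^m)$. Otherwise neither endpoint of $e$ ever enters $S_i$; pick $x\in\{a,b\}\cap OPT_i$, so at step $t_x$ the other endpoint is outside $S_i^{t_x-1}$ and $e$ is one of the edges counted by $v_i(x\mid S_i^{t_x-1})\le m_x$, and charge $(i,e)$ to step $t_x$. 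Now I bound the two piles. The pairs charged to $v_i(S_i^m)$ correspond to distinct edges covered by $S_i^m$, so there are at most $v_i(S_i^m)$ of them, and summing over $i$ gives at most $\ROG$. A pair charged to a fixed step $t_x$ has $x\in OPT_i$, which forces $i=o(x)$ since OPT is a partition; the corresponding edges are distinct edges of $G_{o(x)}$ at $x$ with other endpoint outside $S_{o(x)}^{t_x-1}$, hence there are at most $v_{o(x)}(x\mid S_{o(x)}^{t_x-1})\le m_x$ of them, and summing over $x$ gives at most $\sum_x m_x=\ROG$. Since each pair is charged exactly once, $\sum_i v_i(OPT_i)\le 2\,\ROG$ for this $\sigma$, and taking expectations over $\sigma$ completes the proof.

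I do not expect a genuine obstacle: the estimate is essentially the one in \cite{LLN01}. The points to handle with care are the direction of the submodularity inequalities --- $v_i(j\mid\cdot)$ decreases as the conditioning set grows, which is exactly what lets us pass from $S_i^m$ to the earlier $S_i^{t_j-1}$ and then invoke greedy's choice --- and the fact that bounding the second pile uses that each item lies in exactly one $OPT_i$, i.e.\ that $\sum_i\sum_{j\in OPT_i}m_j=\sum_j m_j=\ROG$. These are the same moving parts that become delicate (and where the random order will be exploited) in the proof of the $\frac{7}{4}$ bound.
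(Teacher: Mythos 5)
Your proof is correct, but it takes a genuinely different route from the paper's. You give the classical deterministic charging argument of \cite{LLN01}, rephrased in the language of edges: each edge--bidder pair counted by $\sum_i v_i(OPT_i)$ is charged either to an edge covered by the final greedy bundle $S_i^m$ (bounded by $\sum_i v_i(S_i^m)=\ROG$) or to the marginal gain $m_x$ at the step where its $OPT$-endpoint $x$ arrives (bounded by $\sum_x m_x=\ROG$ via greedy's maximality), yielding $OPT\leq 2\,\ROG$ pointwise for \emph{every} permutation $\sigma$; the randomness plays no role. The paper, by contrast, deliberately proves the same bound \emph{through} the randomness: for $j$ with $i=O(j)$ it introduces $B(j)$, the number of $j$'s $G_i$-neighbours arriving before $j$, observes that $B(j)$ is uniform on $\{0,\dots,v_i(j)\}$, and deduces $E\bigl[v_i(j\mid S_i^{t-1})\bigr]\geq v_i(j)-E[B(j)]=\tfrac12 v_i(j)$, whence $E[\ROG]\geq\tfrac12\sum_j v_{O(j)}(j)\geq\tfrac12\,OPT$ by subadditivity. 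The paper explicitly acknowledges that every ordering gives $2$ (which is exactly what you reprove), but picks the random-order argument because the uniform distribution of the number of earlier-arriving neighbours is precisely the lever that Claim~\ref{technical} and the quantities $b_O(j),b_C(j)$ amplify in the proof of Theorem~\ref{greedymain}. So your version is a valid and arguably cleaner standalone proof of the factor~$2$, but as a warm-up it discards the one piece of structure the $\tfrac{7}{4}$ analysis is built on, whereas the paper's version is engineered to exhibit it.
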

\begin{proof}
Consider a certain item $j$, for which $i=O\left(j\right)$. Vertex $j$ has $v_{i}\left(j\right)$ neighbours in the grapsh $G_i$.
Let $B\left(j\right)$ be a random variable representing the number of $j$'s neighbours that appeared in the random order before $j$.
$B\left(j\right)$ distributes uniformly between $[0..v_{i}\left(j\right)]$.
Therefore, 
\begin{equation} \label{EXP1}
E[B\left(j\right)] = \frac{1}{2}v_{i}\left(j\right)
\end{equation}
Note that in vertex cover functions, the marginal value of a vertex is the total 
number of neighbours it has, that has not yet taken by the algorithm.
Therefore (we use the standard notation $\delta_G\left(v\right)$ for the set of neighbours of $v$ in a graph $G$),
\begin{equation} \label{eq1}
v_i\left(j|S_i\right) = v_i\left(j\right) - |S_i\cap\delta_{G_i}\left(j\right)|
\end{equation}
Since $S_{i} \cap \delta_{G_{i}}\left(j\right)$ contains only vertices that appear before $j$,
we have that $|S_{i}\cap\delta_{G_{i}}\left(j\right)|\leq B\left(j\right)$.
Therefore, by \eqref{EXP1} and \eqref{eq1} : $E[v_i\left(j|S_i\right)] \geq \frac{1}{2} v_i\left(j\right)$. 
The algorithm chooses at every step the player that maximizes $v_{i'}\left(j|S_{i'}\right)$. 
Therefore, $E[\max_{i'\in N}v_i'\left(j|S_{i'}\right)] \geq E[v_i\left(j|S_i\right)] \geq \frac{1}{2} v_i\left(j\right)$.
By summing over all items we get that the value of the allocation that the algorithm returns is at least
$\frac{1}{2}\Sigma_j v_{O\left(j\right)}\left(j\right)$, which is at least half of the value of the optimal solution.
\end{proof}

In the next subsection we will show that the random greedy algorithm indeed provides an approximation ratio better than $2$. The improved analysis takes advantage on some slackness in the proof above.
For example, if the approximation ratio is no better than $2$, then the proof above implies that for every vertex $j$, $|S_i\cap\delta_{G_i}\left(j\right)|=B\left(j\right)$, i.e., the algorithm used all the edges of 
vertex $j$ in $G_{O\left(j\right)}$ that appeared before it in the random order. However, if that is the case, then we should count the contribution of these edges in the value of the solution that the algorithm outputs. In a similar manner, if the approximation ratio is no better than $2$, the analysis implies that the value that the algorithm gained from item $j$ is close to $v_O\left(j\right)\left(j|S_{O\left(j\right)}\right)$.
Therefore, for every item $j$, there is another player, $i'$ for whom the marginal value of item $j$ is close to the marginal value of player $O\left(j\right)$. Again, we can use the other player to claim that the expected value of the solution is higher.

\subsection{The Main Result}

For the analysis we fix an optimal allocation $(OPT_1,...,OPT_n)$ and let $OPT=\sum v_i\left(OPT_i\right)$.
\begin{theorem} \label{greedymain}
Let $v_1...v_n$ be $n$ vertex cover functions. $E[\ROG\left(v_1...v_n,M\right)]\geq \frac{4}{7}\sum v_i\left(OPT_i\right)$.
\end{theorem}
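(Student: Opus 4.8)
The plan is to refine the warm-up argument by tracking, for each item $j$ with $i = O(j)$, not just the single inequality $E[v_i(j|S_i)] \ge \tfrac12 v_i(j)$, but the entire profile of marginal values and the edges that actually get \emph{used} by the algorithm. Fix an item $j$ and let $i = O(j)$, and let $d = v_i(j) = |\delta_{G_i}(j)|$. As in the warm-up, let $B(j)$ be the number of neighbours of $j$ in $G_i$ that precede $j$ in $\sigma$; conditioned on $B(j) = b$, the marginal $v_i(j|S_i)$ is some value in $\{0,1,\dots,b\}$ (it equals $d$ minus the number of those $b$ preceding neighbours already allocated to $i$). The key observation is a dichotomy: either $v_i(j|S_i)$ is noticeably smaller than $B(j)$ — in which case several of the edges of $j$ in $G_i$ that precede $j$ were already collected earlier by player $i$, contributing to $\mathrm{ROG}$ through those earlier vertices — or $v_i(j|S_i) \approx B(j)$, in which case the greedy player's marginal on $j$ is close to $d - B(j)$'s complement and one can exhibit a \emph{second} player whose marginal on $j$ is also large, so that $\max_{i'} v_{i'}(j|S_{i'})$ strictly beats $\tfrac12 d$ on average. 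The aim is to show that, summing these two types of gains appropriately and charging each "used early edge'' to exactly one vertex, the total expected value of $\mathrm{ROG}$ is at least $\tfrac47 OPT$.

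Concretely, I would set up a charging scheme in which $OPT = \sum_j v_{O(j)}(j)$ is split across items, and for each item $j$ I would lower-bound the expected contribution of the vertices $\{j\} \cup \delta_{G_{O(j)}}(j)$ to $\mathrm{ROG}$ by something like $\tfrac47 v_{O(j)}(j)$, being careful that each edge of each graph $G_i$ is charged $O(1)$ times (an edge $(j,k)$ in $G_i$ with $i = O(j)$ can be charged when we analyze $j$ and possibly when we analyze $k$, so a factor of $2$ in the bookkeeping must be absorbed). The first type of gain is quantified by: if $a$ of the $B(j) = b$ preceding $G_i$-neighbours of $j$ were assigned to $i$ before step $\sigma^{-1}(j)$, then those $a$ edges are already counted in $v_i(S_i^{\,\sigma^{-1}(j)-1})$, hence in the final $v_i(S_i^m)$; I would relate $\mathbb{E}[a]$ to the gap between $\tfrac12 d$ and $\mathbb{E}[v_i(j|S_i)]$. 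For the second type of gain, I would argue that whenever the greedy marginal $v_i(j|S_i) = v_i(j) - |S_i \cap \delta_{G_i}(j)|$ is close to $v_i(j)$ (few of $j$'s $G_i$-neighbours precede it and got taken), the random order makes it likely that, for the player $i'$ with $j$'s edges in $G_{i'}$, few of \emph{those} neighbours precede $j$ either, so $v_{i'}(j|S_{i'})$ is a genuine alternative; taking the max then yields a surplus over $\tfrac12 d$ proportional to the deficit in the first type of gain.

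The main obstacle I anticipate is making the trade-off between the two gain types tight enough to reach exactly $\tfrac47$ rather than some weaker constant: the two bounds pull in opposite directions (small greedy marginal is good for "early edges'' but bad for the naive $\tfrac12 d$ term, and vice versa), and the worst case is presumably some intermediate regime, so the constant $\tfrac47$ should emerge from optimizing a small linear program or convexity argument over the distribution of $B(j)$ and the allocation pattern of preceding neighbours — getting the constants in that optimization to line up, while keeping the per-edge charging consistent across different graphs $G_i$, is where the real work lies. A secondary subtlety is handling items $j$ that are not in any $OPT_i$-bundle of a player who values them, and items whose $O(j)$-degree is $0$ or $1$ (boundary cases where the "second player'' argument degenerates), which I would dispatch separately before running the main averaging.
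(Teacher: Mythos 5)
Your proposal correctly identifies the slack in the warm-up argument that the paper itself points out (the two sources of extra value: early edges of $j$ in $G_{O(j)}$ already collected by $O(j)$, and a competing player whose marginal on $j$ is comparable), but it stops at the level of intuition and never reaches a proof. The hard part — which you yourself flag as "where the real work lies" — is precisely the content of the paper's argument, and none of the concrete devices it needs are present.

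Specifically, what's missing: (1) a precise choice of the competing player; the paper fixes $C(j)=\arg\max_{i\neq O(j)} v_i(j)$ once and for all, whereas you leave "the player $i'$ with $j$'s edges in $G_{i'}$" vague. (2) A clean global counting inequality in place of your local per-vertex charging. You propose to lower-bound the expected contribution of $\{j\}\cup\delta_{G_{O(j)}}(j)$ to ROG by $\tfrac47 v_{O(j)}(j)$ and then absorb a factor of $2$ for double-charging, but this does not control overlap correctly and, as stated, would only give a worse constant; the paper instead proves $\sum_j\bigl(b_C(j)+b_O(j)\bigr)\leq \mathrm{ROG}$ by showing those edge-sets are disjoint subsets of the edges the algorithm actually takes, and never tries to charge vertex-by-vertex. (3) The per-step loss $LOSS(j)=OPT^{t-1}-OPT^{t}$ with $\sum_j LOSS(j)=OPT$, together with the refinement of the LLN argument ($LOSS(j)\leq \mathrm{ROG}(j)+v_{O(j)}(j|S_{O(j)}^{t-1})$ when the item goes to the "wrong" player). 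Your sketch never introduces a loss-type quantity at all, yet this is what lets the paper cancel the dangerous $v_{O(j)}(j|S)$ term and reach a linear system in $\mathrm{ROG}$ and $OPT$. (4) The case split on whether $v_{O(j)}(j)\geq v_{C(j)}(j)$ or the reverse, with the explicit expectation computation $E[\max(X,y)]=\tfrac{x}{2}+\tfrac{y^2+y}{2(x+1)}$ for $X$ uniform on $\{0,\dots,x\}$; this is where the constant $\tfrac47$ actually comes from, and you explicitly say you do not know how it emerges. In short, the plan points in the right direction, but the proposal does not contain a proof — the decisive lemmas (the edge-counting claim, the loss claim, and the two-sided per-item bounds) would all still have to be discovered and proved.
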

\begin{proof}
In the proof we use the notation $O\left(j\right)$ to denote the player that gets item $j$ in the optimal allocation (i.e., $j\in OPT_i$) and  $A\left(j\right)$ to denote the player who gets item $j$ in the algorithm. Notice that $A\left(j\right)$ is a random variable.

Denote by $OPT^t$ the marginal value of the optimal allocation, given that the algorithm already allocated items $\sigma\left(1\right)...\sigma\left(t-1\right)$.
$OPT^t = \sum v_i\left(OPT_i \cap \{\sigma(t),\ldots ...\sigma(m)\}|S_i^t\right)$. Notice that $OPT^0 = OPT$ and $OPT^m = 0$.

We say that an edge $e=\left(j_1,j_2\right)\in G_i$ has been taken by the algorithm at step $t$, if $j_1=\sigma\left(t\right)$, $A\left(j_1\right)=i$, 
and $j_2\notin S_i^{t-1}$. I.e., the algorithm gave player $i$ one vertex adjacent to the edge, while the other vertex has not yet been given to player $i$.

For every item $j$, let $C\left(j\right)=\arg\max_{i\neq O\left(j\right)}v_i\left(j\right)$.
I.e., $C\left(j\right)$ is the largest competitor to take the item from $O\left(j\right)$. In the following we denote by $b_C\left(j\right)$ the number of edges adjacent to $j$ in the graph $G_{C\left(j\right)}$ that were used by the algorithm before item $j$. I.e., $b_C\left(j\right)=v_{C\left(j\right)}\left(j\right)-v_{C\left(j\right)}\left(j|S_{C\left(j\right)}^{t-1}\right)$.
Similarly, denote by $b_O\left(j\right)$  the number of edges, adjacent to $j$ in the graph $G_{O\left(j\right)}$,
used by the algorithm before item $j$. We use a counting argument to bound the quality of the solution that the algorithm outputs:
\begin{claim} \label{BEFORE}
$\ROG\left(v_1...v_n,M\right)\geq \sum_{j=1}^M \left(b_C\left(j\right)+b_O\left(j\right)\right)$.
\end{claim}
Proof for this claim will be provided later. Note that  $\sum b_O\left(j\right) = \sum v_{O\left(j\right)}\left(j\right)-\sum v_{O\left(j\right)}\left(j|S_{O\left(j\right)}^{t-1}\right)$.
By submodularity, $\sum v_{O\left(j\right)}\left(j\right) \geq OPT$. The algorithm at each step chooses the item that maximizes the marginal utility. Denote by $\ROG^t$ the value of all the players at iteration $t$: $\ROG^t = \sum v_i\left(S_i^t\right)$. Therefore, $\ROG^{t}-\ROG^{t-1} \geq v_{O\left(j\right)}\left(j|S_{O\left(j\right)}^{t-1}\right)$. Therefore,
\begin{equation*}
\sum b_O\left(j\right) \geq OPT - \ROG\left(v_1...v_n,M\right)
\end{equation*}
Combined with claim \ref{BEFORE}, we get that:
\begin{corollary} \label{COR}
$b_C\left(j\right)\leq 2\ROG\left(v_1...v_n,M\right) - OPT$
\end{corollary}

Let $\ROG\left(j\right)$ be a random variable representing value the algorithm gains from item $j$. I.e., for $j=\sigma\left(t\right)$, then $\ROG\left(j\right) = \ROG^{t}-\ROG^{t-1}$. 
We present an argument in which a competitor for item $j$ with higher value increases the expectancy of the value of the algorithm from item $j$.

\begin{claim} \label{POS}
\begin{itemize}
\item If $v_{O\left(j\right)}\left(j\right) \geq v_{C\left(j\right)}\left(j\right)$, then
$$E\left(\ROG\left(j\right)\right)\geq \left(\frac{v_{O\left(j\right)}\left(j\right)}{2} + \frac{v_{C\left(j\right)}\left(j\right)^2+v_{C\left(j\right)}\left(j\right)}{2\left(v_{O\left(j\right)}+1\right)}-b_C\left(j\right)\right)$$
\item If $v_{C\left(j\right)}\left(j\right) \geq v_{O\left(j\right)}\left(j\right)$, then
$$E\left(\ROG\left(j\right)\right)\geq v_{C\left(j\right)}\left(j\right) -b_C\left(j\right)$$
\end{itemize}
\end{claim}
The proof for this claim will also be provided later.

In contrast to claim \ref{POS} we claim that if $v_{C\left(j\right)}\left(j\right)$ is small, then roughly speaking, the algorithm
almost always gives item $j$ to $O\left(j\right)$. That is because if item $j$ appears in the beginning of the random order, then the marginal 
value of $j$ to player $O\left(j\right)$ is bigger than $v_{C\left(j\right)}\left(j\right)$.

for an item $j=\sigma\left(t\right)$, let $LOSS\left(j\right)$ be the decrease in the value of the optimal, given that we
gave item $j$ to $A\left(j\right)$: $LOSS\left(j\right) = OPT^{t-1} - OPT^{t}$.
\begin{claim} \label{NEG}
If $v_{O\left(j\right)}\left(j\right) \geq v_{C\left(j\right)}\left(j\right)$, then
$$E[\ROG\left(j\right)] \geq  E[LOSS\left(j\right)] - \frac{v_{C\left(j\right)}\left(j\right)^2+v_{C\left(j\right)}\left(j\right)}{v_{O\left(j\right)\left(j\right)}+1}$$
If $v_{C\left(j\right)}\left(j\right) \geq v_{O\left(j\right)}\left(j\right)$, then
$$E[\ROG\left(j\right)] \leq  E[LOSS\left(j\right)] - v_{O\left(j\right)}\left(j\right)$$.

\end{claim}

The proof for this claim will also be provided later. First, we show how to derive theorem \ref{greedymain} using claims \ref{BEFORE}, \ref{POS} and \ref{NEG}.

We split our analysis to two sets of items. $M_1$ is the set of items for which $v_{O\left(j\right)}\left(j\right) \geq v_{C\left(j\right)}\left(j\right)$.
$M_2$ is the rest of the items.
By claim \ref{POS} on items in $M_1$ we have,
\begin{equation*}
2E[\sum_{j\in M_1}\ROG\left(j\right)]\geq 2\sum_{j\in M_1}\left(\frac{v_{O\left(j\right)}\left(j\right)}{2} + \frac{v_{C\left(j\right)}\left(j\right)^2+v_{C\left(j\right)}\left(j\right)}{2\left(v_{O\left(j\right)}\left(j\right)+1\right)}-b_C\left(j\right)\right)
\end{equation*}
By claim \ref{NEG} we have,
\begin{equation*}
E[\sum_{j\in M_1}\ROG\left(j\right)]\geq E[\sum_{j\in M_1}LOSS\left(j\right)] - \frac{v_{C\left(j\right)}\left(j\right)^2+v_{C\left(j\right)}\left(j\right)}{\left(v_{O\left(j\right)}+1\right)}
\end{equation*}

Summing these equation and rearranging:
\begin{equation} \label{M1}
E[\sum_{j\in M_1}\ROG\left(j\right)] \geq \frac{1}{3}E[\sum_{j\in M_1}\left(v_{O\left(j\right)}\left(j\right)+LOSS\left(j\right)-2b_C\left(j\right)\right)]
\end{equation}

Our next goal is to give a similar bound for items in $M_2$. We apply claim \ref{POS} on all items in $M_2$. Note that this time we use the second part of the claim:
\begin{equation} \label{5}
\frac{2}{3}E[\sum_{j\in M_2}\ROG\left(j\right)]\geq \frac{2}{3}E\left(\sum_{j\in M_2}v_{C\left(j\right)}\left(j\right)-b_C\left(j\right)\right) 
\end{equation}
Using claim \ref{NEG} we have:
\begin{equation} \label{6}
\frac{1}{3}E[\sum_{j\in M_2}\ROG\left(j\right)]\geq \frac{1}{3}E\left(\sum_{j\in M_2}E[LOSS\left(j\right)]- v_{O\left(j\right)}\left(j\right)\right) 
\end{equation}
Summing \eqref{5} and \eqref{6}, we have
\begin{equation*}
E[\sum_{j\in M_2}\ROG\left(j\right)]\geq \frac{1}{3}E\left(\sum_{j\in M_2}E[LOSS\left(j\right)]- v_{O\left(j\right)}\left(j\right) + 2v_{C\left(j\right)}\left(j\right) -2b_C\left(j\right)\right) 
\end{equation*}
Since we are looking at $j\in M_2$, $v_{C\left(j\right)}\left(j\right)\geq v_{O\left(j\right)}\left(j\right)$. Therefore,
\begin{equation} \label{M2}
E[\sum_{j\in M_2}\ROG\left(j\right)] \geq \frac{1}{3}E[\sum_{j\in M_2}\left(v_{O\left(j\right)}\left(j\right)+LOSS\left(j\right)-2b_C\left(j\right)\right)]
\end{equation}
Combining \eqref{M1} and \eqref{M2}:
\begin{equation*}
E[\sum_{j\in M}\ROG\left(j\right)]\geq \frac{1}{3}E\left(\sum_{j\in M} v_{O\left(j\right)}\left(j\right) + E[LOSS\left(j\right)] - 2b_C\left(j\right)\right) 
\end{equation*}
Since $\sum v_{O\left(j\right)}\geq OPT$, $\sum\left(LOSS\left(j\right)\right)=OPT$, and by corollary \ref{COR},
We have  
$$E[\sum_{j\in M}\ROG\left(j\right)]\geq \frac{1}{3}\left(OPT+OPT-4\ROG + 2OPT\right) $$
It follows that $E[\sum_{j\in M}\ROG\left(j\right)]\geq \frac{4}{7} OPT$.
\end{proof}

\subsubsection*{Proof of claim \ref{BEFORE}}
We will show that $\ROG\left(v_1...v_n,M\right)\geq \sum_{j=1}^M \left(b_C\left(j\right)+b_O\left(j\right)\right)$,
by proving that $\sum_{j=1}^M \left(b_C\left(j\right)+b_O\left(j\right)\right)$ counts partial set of all the edges taken by the algorithm.

We denote each edge taken by the algorithm by $\left(j_1,j_2,i\right)$, where $\left(j_1,j_2\right)\in G_i$.
Consider an item $j$ such that $j=\sigma\left(t\right)$.
Let $B_O\left(j\right)$ the set of edges adjacent to $j$ in the graph $G_{C\left(j\right)}$,
used by the algorithm before item $j$. 
I.e., $B_O\left(j\right)=\left\{\left(j_1,j,i\right)|i=O\left(j\right) \wedge j_1\in S_i^{t'} \wedge \left(t'<t\right) \right\}$, 
and $b_O\left(j\right)=|B_O\left(j\right)|$.
Similarly, Let $B_C\left(j\right)=\left\{\left(j_1,j,i\right)|i=C\left(j\right) \wedge  j_1\in S_i^{t'} \wedge \left(t'<t\right) \right\}$
and $b_C\left(j\right)=|B_c\left(j\right)|$.

Notice that for every $j$, $B_C\left(j\right)$ and $B_O\left(j\right)$, have only edges taken by the algorithm.
Moreover, all the sets $B_C\left(1\right)...B_C\left(m\right),B_O\left(1\right)...B_O\left(m\right)$ are disjoint.
Therefore, $\sum_{j=1}^M \left(b_C\left(j\right)+b_O\left(j\right)\right)$ counts only edges used by the algorithm, and counts every edge
at most once. The claim follows.

\subsubsection*{Proof of claim \ref{POS}}
We start with the following claim:
\begin{claim} \label{technical}
Let $X$ be a uniformly distributed random variable between $[0...x]$ and let $y$ be a constant. 
If $x\geq y$  then $E[\max \left(X,y\right)]\geq \frac{x}{2} + \frac{y^2+y}{2\left(x+1\right)}$.
If $y\geq x$ then $E[\max \left(X,y\right)]= y$
\end{claim}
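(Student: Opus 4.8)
The plan is to compute $E[\max(X,y)]$ directly, exploiting that $X$ takes only the finitely many integer values $0,1,\ldots,x$, each with probability $\frac{1}{x+1}$, and that $y$ is a non-negative integer (in the intended application $X=B(j)$ for the player $O(j)$ and $y=v_{C(j)}(j)$, a count of edges). The second case is immediate: when $y\geq x$ every value of $X$ is at most $y$, so $\max(X,y)=y$ with probability $1$ and hence $E[\max(X,y)]=y$.

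For the first case I would assume $0\le y\le x$ and split the support of $X$ at $y$. For $k\in\{0,1,\ldots,y\}$ one has $\max(k,y)=y$, and these are $y+1$ values; for $k\in\{y+1,\ldots,x\}$ one has $\max(k,y)=k$. Therefore
$$E[\max(X,y)]=\frac{1}{x+1}\left((y+1)y+\sum_{k=y+1}^{x}k\right).$$
Substituting $\sum_{k=y+1}^{x}k=\frac{x(x+1)}{2}-\frac{y(y+1)}{2}$ turns the bracket into $\frac{y(y+1)}{2}+\frac{x(x+1)}{2}$, and dividing by $x+1$ gives exactly $\frac{x}{2}+\frac{y^2+y}{2(x+1)}$. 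In particular this shows the inequality asserted in the claim actually holds with equality, which is all that is needed downstream.

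The computation is routine; the only points requiring care are the boundary term at $k=y$ (it belongs to the first group, where $\max$ equals $y$, not the second) and the fact that $X$ ranges over the integers $\{0,\ldots,x\}$ rather than the real interval $[0,x]$ — the same discreteness already used in the warm-up (Theorem~\ref{0.5}) when observing that the number of earlier neighbours of a vertex is uniform on $\{0,\ldots,v_i(j)\}$. I do not foresee any genuine obstacle in this claim itself; the substantive work is in \emph{applying} it when proving Claim~\ref{POS}, namely identifying the correct $X$ and $y$, invoking that the algorithm allocates $j$ to a marginal-value-maximising player, and accounting for the $b_C(j)$ edges of $j$ in $G_{C(j)}$ already consumed before $j$ arrives.
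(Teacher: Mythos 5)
Your computation is correct and is essentially identical to the paper's own proof: both split the support of $X$ at $y$, sum $\frac{y}{x+1}$ over the first $y+1$ values and $\frac{k}{x+1}$ over the rest, and simplify to $\frac{x}{2}+\frac{y^2+y}{2(x+1)}$ (so the bound holds with equality), with the $y\geq x$ case being immediate in both. No gaps.
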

\begin{proof}
Computing directly the expectation, we get
\begin {eqnarray*}
E[\max \left(X,y\right)]& = &\sum_{i=0}^y\frac{y}{x+1} + \sum_{i=y+1}^x \frac{i}{x+1}\\
& = & \frac{\left(y+1\right)y}{x+1} + \frac{\left(x-y\right)\left(x+y+1\right)}{2\left(x+1\right)}\\
& = & \frac{x}{2} + \frac{y^2+y}{2\left(x+1\right)}
\end{eqnarray*}

On the other hand, if $y\geq x$, then $\max \left(X,y\right) = y$ always.
\end{proof}

To prove claim \ref{POS} we need to show:
\begin{itemize}
\item If $v_{O\left(j\right)}\left(j\right) \geq v_{C\left(j\right)}\left(j\right)$, then
$$E\left(\ROG\left(j\right)\right)\geq \left(\frac{v_{O\left(j\right)}\left(j\right)}{2} + \frac{v_{C\left(j\right)}\left(j\right)^2+v_{C\left(j\right)}\left(j\right)}{2\left(v_{O\left(j\right)}+1\right)}-b_C\left(j\right)\right)$$
\item If $v_{C\left(j\right)}\left(j\right) \geq v_{O\left(j\right)}\left(j\right)$, then
$$E\left(\ROG\left(j\right)\right)\geq v_{C\left(j\right)}\left(j\right) -b_C\left(j\right)$$
\end{itemize}
Suppose that item $j=\sigma\left(t\right)$. The algorithm can choose to give the item to either $O\left(j\right)$ or to $C\left(j\right)$.
\begin{eqnarray*}
\ROG\left(j\right) &\geq & \max \left\{v_{O\left(j\right)}\left(j|S_{O\left(j\right)}^{t-1}\right),v_{C\left(j\right)}\left(j|S_{C\left(j\right)}^{t-1}\right)\right\} \\ 
& = &\max \left\{v_{O\left(j\right)}\left(j\right)-b_O\left(j\right),v_{C\left(j\right)}\left(j\right)-b_C\left(j\right)\right\} \\
& \geq & \max\left\{v_{O\left(j\right)}\left(j\right)-b_O\left(j\right),v_{C\left(j\right)}\left(j\right)\right\}-b_C\left(j\right) 
\end{eqnarray*}
We repeat an argument we used in claim \ref{0.5}: $b_O\left(j\right)$ can count only edges that appear before $j$ in the permutation $\sigma$.
 $v_{O\left(j\right)}\left(j\right) - b\left(j\right)\geq |\{\sigma(t+1),\sigma(m)\} \cap \delta_{G_{O\left(j\right)}}|$. 
 Set $X = |\{\sigma(t+1),\ldots, \sigma(m)\} \cap \delta_{G_{O\left(j\right)}}|$. 
We have: $$\ROG\left(j\right) \geq \max\left(X,v_{C\left(j\right)}\left(j\right)\right) - b_C\left(j\right)$$ 
Since $X$ distributes uniformly between $[0..v_{O\left(j\right)}\left(j\right)]$, we can apply claim \ref{technical} to bound 
$\max\left(X,v_{C\left(j\right)}\left(j\right)\right)$, and claim \ref{POS} follows.

\subsubsection*{Proof of claim \ref{NEG}}
We use a claim similar to the claim used for the proof that the greedy algorithm return a $\frac{1}{2}$ approximation of the 
maximal welfare in \cite{LLN01}.

\begin{claim} \label{classic}
Fix a permutation $\sigma$. Consider $j=\sigma\left(t\right)$.
If the algorithm assigned item $j$ to $O\left(j\right)$, then $LOSS\left(j\right) \leq \ROG\left(j\right)$.
If the algorithm assigned item $j$ to $i\neq O\left(j\right)$ then
$LOSS\left(j\right) \leq \ROG\left(j\right) + v_{O\left(j\right)}\left(j|S_{O\left(j\right)}^{t-1}\right)$.
\end{claim}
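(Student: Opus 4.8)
The plan is to prove this exactly as the corresponding step is handled for ordinary greedy in \cite{LLN01}: unwind the definitions of $OPT^{t-1}$ and $OPT^{t}$ at the single step $t$ at which $j=\sigma(t)$ is allocated, and compare the two sums term by term. Write $i^\ast=A(j)$ for the player the algorithm gives $j$ to, and $i_O=O(j)$ for its owner in the fixed optimal allocation (we may assume $OPT$ allocates every item, so $i_O$ is defined; an item left unallocated by $OPT$ contributes $LOSS(j)=0$ and nothing needs proving). Passing from $OPT^{t-1}$ to $OPT^{t}$ has exactly two effects: $j$ is removed from the ``remaining optimal bundle'' of $i_O$, and $j$ enters the conditioning set $S_{i^\ast}$. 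Hence only the $i_O$-term and the $i^\ast$-term of the defining sum $\sum_i v_i(\cdot\mid\cdot)$ can change, and $LOSS(j)$ is the sum of these two changes. The argument then splits on whether $i^\ast=i_O$.

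In the case $i^\ast=i_O$ only one term moves. Letting $R=OPT_{i_O}\cap\{\sigma(t+1),\dots,\sigma(m)\}$ be the part of $i_O$'s optimal bundle still to come, the relevant term of $OPT^{t-1}$ is $v_{i_O}(\{j\}\cup R\mid S_{i_O}^{t-1})$ and that of $OPT^{t}$ is $v_{i_O}(R\mid S_{i_O}^{t-1}\cup\{j\})$; expanding both marginals into plain $v(\cdot)$-differences, the two share the term $v_{i_O}(\{j\}\cup R\cup S_{i_O}^{t-1})$, which cancels, and one is left with $LOSS(j)=v_{i_O}(j\mid S_{i_O}^{t-1})=\ROG(j)$ — so here the claimed inequality is in fact an equality.

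In the case $i^\ast\neq i_O$ both terms move. The $i_O$-term keeps its conditioning set but drops $j$ from its bundle, so its change is $v_{i_O}(\{j\}\cup R\mid S_{i_O}^{t-1})-v_{i_O}(R\mid S_{i_O}^{t-1})=v_{i_O}(j\mid R\cup S_{i_O}^{t-1})$, which is at most $v_{i_O}(j\mid S_{i_O}^{t-1})$ by submodularity. The $i^\ast$-term keeps its bundle $R^\ast=OPT_{i^\ast}\cap\{\sigma(t+1),\dots,\sigma(m)\}$ (here $j\notin OPT_{i^\ast}$) but gains $j$ in its conditioning set, so its change is $v_{i^\ast}(R^\ast\mid S_{i^\ast}^{t-1})-v_{i^\ast}(R^\ast\mid S_{i^\ast}^{t-1}\cup\{j\})$; applying the elementary identity $v(A\mid B)-v(A\mid B\cup C)=v(C\mid B)-v(C\mid B\cup A)$ with $C=\{j\}$ turns this into $v_{i^\ast}(j\mid S_{i^\ast}^{t-1})-v_{i^\ast}(j\mid R^\ast\cup S_{i^\ast}^{t-1})\le v_{i^\ast}(j\mid S_{i^\ast}^{t-1})=\ROG(j)$ by monotonicity. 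Summing the two estimates gives $LOSS(j)\le \ROG(j)+v_{i_O}(j\mid S_{i_O}^{t-1})$.

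I do not expect a genuine obstacle here — the whole proof is two telescoping identities plus one use each of submodularity and monotonicity. The points to be careful about are: getting the index ranges in $OPT^{t}$ right, so that $j=\sigma(t)$ lies in the ``remaining'' set at step $t-1$ but not at step $t$ (and $OPT^{0}=OPT$); using that $j\notin S_{i^\ast}^{t-1}$, so that the algorithm's gain $\ROG(j)$ really is $v_{i^\ast}(j\mid S_{i^\ast}^{t-1})$; and invoking submodularity only in the safe direction ``larger conditioning set, smaller marginal,'' which is all that is needed. Note that neither the vertex-cover structure nor the randomness of $\sigma$ enters, which is why the claim is stated for a fixed permutation and would hold for arbitrary monotone submodular valuations, exactly as in \cite{LLN01}.
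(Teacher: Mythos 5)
Your proof is correct, and it is precisely the standard Lehmann--Lehmann--Nisan telescoping argument that the paper itself invokes (the paper defers the proof to an appendix and explicitly describes it as the claim used for the $1/2$-approximation in \cite{LLN01}). The case split on $A(j)=O(j)$ versus $A(j)\neq O(j)$, the cancellation/submodularity/monotonicity steps, and the observation that neither the vertex-cover structure nor the randomness of $\sigma$ is needed all match what the intended proof does.
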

The proof for this claim appears in the appendix. Notice that this claim is true for any permutation $\sigma$.

We will now derive claim \ref{NEG}. Recall that we want to show that:
\begin{itemize}
\item If $v_{O\left(j\right)}\left(j\right) \geq v_{C\left(j\right)}\left(j\right)$, then
$$E[\ROG\left(j\right)] \geq  E[LOSS\left(j\right)] - \frac{v_{C\left(j\right)}\left(j\right)^2+v_{C\left(j\right)}\left(j\right)}{v_{O\left(j\right)}+1}$$
\item If $v_{C\left(j\right)}\left(j\right) \geq v_{O\left(j\right)}\left(j\right)$, then
$$E[\ROG\left(j\right)] \leq  E[LOSS\left(j\right)] - v_{O\left(j\right)}\left(j\right)$$
\end{itemize}

We start with the case where $v_{O\left(j\right)}\left(j\right) \geq v_{C\left(j\right)}\left(j\right)$.
If $v_{O\left(j\right)}\left(j|S^t_i\right)> v_{C\left(j\right)}\left(j\right)$, the algorithm gives item $j$ to $O\left(j\right)$.
$v_{O\left(j|S_i^t\right)} = v_{O\left(j\right)}-b_O\left(j\right)$, and as we have seen before, $b_O\left(j\right)$ can be bounded by a uniform variable.
\begin{eqnarray*}
Pr[v_{O\left(j\right)} \left(j|S_i^t\right) > v_{C\left(j\right)}\left(j\right)] & = & Pr[b_O\left(j\right) < \left(v_{O\left(j\right)}\left(j\right) - v_{C\left(j\right)}\left(j\right)\right)] \\
& \geq &\frac{v_{O\left(j\right)}\left(j\right) - v_{C\left(j\right)}\left(j\right)} {v_{O\left(j\right)}\left(j\right)+1}
\end{eqnarray*}
Therefore, the probability that the algorithm did not give item $j$ to $O\left(j\right)$ is at most $\frac{1 + v_{C\left(j\right)}\left(j\right)} {v_{O\left(j\right)}\left(j\right)+1}$,
and in that case $v_{O\left(j\right)}\left(j|S_{O\left(j\right)}^{t-1}\right) \leq  v_{C\left(j\right)}\left(j\right)$.
We use claim \ref{classic} to bound $E[LOSS\left(j\right)]$:
\begin{eqnarray*}
E[LOSS\left(j\right)] &= &E[LOSS\left(j\right) | j\in O\left(j\right)]\cdot Pr[j\in O\left(\right)] + E[LOSS\left(j\right) | j\notin O\left(j\right)]\cdot Pr[j\notin O\left(j\right)] \\ 
& \leq & E[ALG\left(j\right)\cdot Pr[j\in O\left(\right)] + \left(E[\ROG\left(j\right)]+ v_{C\left(j\right)}\left(j\right)\right)\cdot Pr[j\notin O\left(j\right)]\\ 
& = & E[ALG\left(j\right)] + \frac{v_{C\left(j\right)}\left(j\right) + v_{C\left(j\right)}\left(j\right)^2} {v_{O\left(j\right)}\left(j\right)+1} 
\end{eqnarray*}

Suppose that $v_{O\left(j\right)}\left(j\right) \leq v_{C\left(j\right)}\left(j\right)$.
By claim \ref{technical}, $LOSS\left(j\right) \leq \ROG\left(j\right) + v_{O\left(j\right)} \left(j|S_{O\left(j\right)}^{t-1}\right)$.
Since $v_{O\left(j\right)} \left(j|S_{O\left(j\right)}^{t-1}\right)\leq v_{O\left(j\right)} \left(j\right)$,
we have 
$E[\ROG\left(j\right)] \leq  E[LOSS\left(j\right)] - v_{O\left(j\right)}\left(j\right)$.

\bibliographystyle{plain}
\bibliography{bib}

\end{document}